\documentclass[11pt]{article}
\usepackage[T1]{fontenc}
\usepackage[utf8]{inputenc}
\usepackage{lmodern}
\usepackage[margin=1in]{geometry}
\usepackage{amsmath,amssymb,amsthm}
\usepackage{booktabs,longtable,array}
\usepackage{graphicx,xcolor}
\usepackage[hidelinks]{hyperref}
\usepackage{microtype}
\newtheorem{theorem}{Theorem}[section]
\newtheorem{lemma}[theorem]{Lemma}
\newtheorem{proposition}[theorem]{Proposition}
\theoremstyle{remark}

\newcommand{\E}{\mathbb{E}}
\newcommand{\Prob}{\mathbb{P}}
\newcommand{\ind}{\mathbf{1}}
\newcommand{\Pois}{\operatorname{Poisson}}
\newcommand{\Bin}{\operatorname{Binomial}}
\newcommand{\Var}{\operatorname{Var}}
\newcommand{\en}{\mathcal{E}}
\newcommand{\code}[1]{\texttt{\small #1}}
\title{An upper bound of 4.268 for the random 3-SAT satisfiability threshold}
\author{Fedor Vorobyev\\\small Independent researcher\\
  \small\href{mailto:fedorvorobev@gmail.com}{fedorvorobev@gmail.com}}
\date{October 2026}
\hypersetup{pdftitle={An upper bound of 4.268 for the random 3-SAT satisfiability threshold},
  pdfauthor={Fedor Vorobyev}}
\begin{document}
\maketitle
\begin{abstract}
We prove that a random 3-SAT formula with $n$ variables and
$\lfloor 4.268n\rfloor$ clauses is unsatisfiable with high probability.
The satisfiability threshold has attracted considerable attention: successive
upper bounds reached 4.4898 in the work of D\'iaz, Kirousis, Mitsche, and
P\'erez-Gim\'enez, while the cavity method predicts a value near 4.267.
Our bound brings the rigorous upper estimate close to this prediction.

The proof builds on earlier interpolation methods, including the energetic
approach of Achlioptas and Menchaca-M\'endez.
We study the smallest number of clauses that any assignment must violate,
comparing the random formula with a simpler system of constraints on
individual variables. This comparison reduces the bound to a finite
calculation, which we verify using exact rational arithmetic.
The proof is formalized in Lean 4.
\end{abstract}

\section{Introduction}
A random 3-SAT formula is formed by sampling clauses independently.
For each clause, choose three distinct variables uniformly and negate
each independently with probability $1/2$. Repeated clauses are allowed.
As the number of clauses increases, the probability that the formula is
satisfiable decreases. We study where this transition occurs, measured
by the ratio of clauses to variables.

Friedgut~\cite{Friedgut} proved that this transition is \emph{sharp}.
As the number of variables grows, the probability of satisfiability drops
from near one to near zero over an interval of clause-to-variable ratios
whose width tends to zero. For example, the ratios at which this
probability falls below $90\%$ and $10\%$ become arbitrarily close.

Sharpness describes the width of the transition, but does not by itself
show that its location settles at a fixed value. The transition may be
centred at a ratio $a_n$ that depends on the number $n$ of variables.
Whether $a_n$ converges to a constant is still open for random 3-SAT.
The cavity prediction is that such a limit exists and is approximately
$4.267$~\cite{MMZ}. Ding, Sly, and Sun~\cite{DSS} proved the existence
and the predicted value of a limiting threshold for random $k$-SAT when
$k$ is sufficiently large; their result does not cover $k=3$.

We bound the location of the transition from above, without assuming
that it converges. Throughout the paper, ``with high probability'' means
that the probability tends to one as the number of variables grows.

\begin{theorem}\label{thm:main}
A random 3-SAT formula in this model with $n$ variables and $\lfloor4.268n\rfloor$
clauses is unsatisfiable with high probability as $n\to\infty$.
\end{theorem}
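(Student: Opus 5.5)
We plan to follow the energetic interpolation strategy described in the abstract. Write $\en(\Phi)=\min_\sigma \#\{\text{clauses violated by }\sigma\}$ for the ground-state energy of the random formula $\Phi$ with $m=\lfloor \alpha n\rfloor$ clauses, $\alpha=4.268$. It suffices to show that $\E\,\en(\Phi)\ge \delta n$ for some $\delta>0$ and all large $n$. Indeed, $\en$ changes by at most $1$ when one clause is resampled, so McDiarmid's inequality gives $\Prob(\en(\Phi)=0)\le \exp(-\delta^2n^2/(2m))\to 0$. It is convenient to work with a Poissonized clause count $\Pois(\alpha n)$; the de-Poissonization step is standard by monotonicity in $m$ and the Lipschitz property. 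We also replace the hard minimum by the free energy at inverse temperature $\beta$, $F_\beta=-\beta^{-1}\log\sum_\sigma e^{-\beta H(\sigma)}$, which satisfies $\en\ge F_\beta-\beta^{-1}n\log 2$. At the end we choose $\beta$ large but finite.

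\textbf{Interpolation step.} For $t\in[0,1]$ we use the standard Franz--Leone/Panchenko--Talagrand construction. It combines a Poisson($t\alpha n$) number of genuine 3-clauses with a Poisson($(1-t)\cdot 3\alpha n$) number of unary ``cavity'' constraints, each acting on a single variable. Each unary constraint is built from two independent fields drawn from a probability distribution $\mu$ on $[0,1]$ (a 1RSB-type ansatz with Parisi parameter $y$), which encodes the probability that the other two literals of a clause are violated. Differentiating in $t$ and using the convexity inequality for $x\mapsto 1-x$ products in 3-SAT (Franz--Leone, Panchenko--Talagrand, valid for even-type nonlinearities $\prod(1-\cdot)$ with $k$ odd via the known trick for $k$-SAT), we expect the derivative to have a sign. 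This would give
\[
\liminf_n \tfrac1n \E F_\beta \;\ge\; \mathcal{P}_{\beta}(\alpha,\mu,y),
\]
where $\mathcal{P}$ is an explicit functional: one variable term minus a $2\alpha$-weighted edge/clause correction. This lower bound holds for \emph{every} choice of $\mu$ and $y$. As $\beta\to\infty$ with $y\beta$ fixed, $\mathcal{P}$ tends to an energetic functional in the spirit of Achlioptas--Menchaca-Méndez. Here the unary fields become integer-valued ``warnings'', and the functional involves only finitely supported distributions on counts of violated constraints.

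\textbf{Finite certificate.} We then exhibit an explicit finitely supported $\mu$ with rational weights, together with a rational $y$, for which the limiting energetic functional at $\alpha=4.268$ is strictly positive. In practice we would take $\mu$ to be an approximate fixed point of the 1RSB/survey-propagation energetic recursion near $\alpha\approx4.267$, truncated to a few atoms. The Poisson sums over variable degrees would be truncated with a rigorous tail bound, and everything would be evaluated in exact rational arithmetic, with $\log$ and $\exp$ handled by certified rational enclosures. We then pass back to finite $\beta$ by continuity, keeping the error below the positive margin; the $\beta^{-1}n\log2$ term is absorbed by taking $\beta$ large.

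\textbf{Main obstacle.} We expect the main difficulty to be this last quantitative step. The true threshold is predicted at $4.2667$, so the margin at $4.268$ is tiny, and a crude $\mu$ will fail. The certificate therefore needs a carefully optimized distribution, and the truncation and tail errors must be controlled sharply enough not to swamp the margin. A secondary difficulty is rigorously justifying the sign of the interpolation derivative for 3-SAT in the energetic ($\beta\to\infty$) limit with the chosen unary system. To handle this, we would prove the inequality at finite $\beta$, where the standard argument applies, and only then take limits.
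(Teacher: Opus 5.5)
Your outline is sound and shares the paper's skeleton: sites at rate $3\alpha$ built from a law of warning rates $R$, the $2\alpha\,\E\log(1-qR_1R_2R_3)$ correction, an exact rational certificate, then de-Poissonization and concentration. The difference is in how you obtain the key comparison.

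\textbf{The comparison step.} You work at finite $\beta$, cite the Franz--Leone/Panchenko--Talagrand 1RSB bound with Parisi parameter $m=y/\beta$, and let $\beta\to\infty$. The paper never introduces a temperature. It proves the comparison directly for the minimum, via the insertion identity $e^{-y\en(H+\ind_P)}=e^{-y\en(H)}(1-qB_P)$ and the expansion of $\log(1-qz)$. The $h$-th term of that expansion is $-q^h/h$ times an average of $(B_h-C_h)^2(B_h+2C_h)\ge0$ with $B_h,C_h\ge0$. A discrete binomial-to-Poisson lemma replaces your $t$-derivative. This nonnegativity is, to my understanding, exactly why the finite-$\beta$ bound holds for $K$-SAT with odd $K$. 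So your ``we expect the derivative to have a sign'' can be stated as a definite citation once you confirm the odd-$K$ coverage in Panchenko--Talagrand; no parity trick is needed.

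\textbf{The limit.} Your $\beta\to\infty$ passage is harmless. If the inner fields are $\{0,1\}$-valued warnings with outer rates $R\sim\mu$, the edge term equals $\E\log(1-qR_1R_2R_3)$ exactly for every $\beta$. The site term differs from $\log\E_a e^{-yE_a}$ by at most $mn\log 2$ (the paper's \eqref{eq:bridge}), which costs only $\log 2/\beta$ per variable after dividing by $y$. Also, $\en\ge F_\beta$ holds outright, so your $\beta^{-1}n\log2$ loss is unnecessary.

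\textbf{Trade-off.} Your route buys reuse of a published theorem. The paper's route buys a self-contained argument using only finite sums, which is what makes the Lean formalization feasible.

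\textbf{Smaller differences.} McDiarmid gives you exponential concentration, where the paper uses a martingale variance bound and Chebyshev, which gives $O(1/n)$ and suffices. You still need to transfer from independently sampled indices (used in the interpolation) to three distinct variables per clause. The paper does this by conditioning, at cost $e^{6\alpha}$; a coupling that changes $O(1)$ clauses in expectation would also work.

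\textbf{The certificate.} It is simpler than you anticipate. The paper lower-bounds $\min(N_{v,0},N_{v,1})$ by the conflict indicator and uses $d_y\ge\E\log(1-R_1R_2R_3)$. This yields the $y$-free closed form $\Psi_T=-\sum_{k\le T}\mu_k^2/k-2\alpha\E\log(1-R_1R_2R_3)$ for the three-atom law $R\in\{1/20,2/5,17/20\}$. So no degree truncation is needed, only $y$ large enough that $Te^{-y}$ stays below the certified margin of $10^{-5}$.
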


The proof gives an explicit inequality whose verification uses exact
rational arithmetic. The argument, including this calculation, is
formalized in Lean 4.

\paragraph{Bounds on the threshold.}
A lower bound identifies ratios below which a random formula is
satisfiable with high probability; an upper bound identifies ratios
above which it is unsatisfiable with high probability. These bounds make
sense even without a limiting threshold constant.

Figure~\ref{fig:history} shows selected results from both directions.
The early upper bound of about $5.191$ comes from counting satisfying
assignments~\cite{FP}. Later arguments lowered it to $4.506$~\cite{DBM}
and then $4.4898$~\cite{DKMP}. From below, analyses of algorithms for
finding satisfying assignments reached $3.52$~\cite{HS,KKL06}.
The cavity method predicts a value near $4.267$~\cite{MMZ}.
Our upper bound of $4.268$ comes close to that prediction; it does not
establish the predicted value or close the gap to the lower bound.

\begin{figure}[tbp]
\centering
\input{figures/threshold-history.tex}
\caption{Selected bounds on the random 3-SAT satisfiability threshold.
Downward triangles denote upper bounds and upward triangles lower bounds;
the purple downward triangle is Theorem~\ref{thm:main}.
The diamond and dashed line denote the approximate cavity prediction,
not a proved bound. Dates refer to the cited publications or preprints,
not necessarily the first announcements; the prediction is placed at the
2003 first version of~\cite{MMZ}. The $3.52$ point dates the report
of Hajiaghayi and Sorkin~\cite{HS}, which leaves rigorous justification of
its numerical calculations to future work; Kaporis, Kirousis, and
Lalas~\cite{KKL06} independently obtained the same bound.
These are selected results, not a sequence of record improvements.}
\label{fig:history}
\end{figure}

\paragraph{Approach.}
Interpolation methods compare a random formula with a simpler system
by replacing clauses with constraints on individual variables. Our proof
belongs to the line of work developed by Franz and Leone~\cite{FL},
Panchenko and Talagrand~\cite{PT}, and Lelarge and Oulamara~\cite{LO}.
These methods also supply the upper-bound part of the large-$k$ result
of Ding, Sly, and Sun~\cite{DSS}.

A particularly close predecessor is the energetic interpolation of
Achlioptas and Menchaca-M\'endez~\cite{AMM}. They work directly with the
minimum number of violated constraints and compare it with the minimum
for a system of one-variable constraints. Their trial model is controlled
by a single probability. Here we use a distribution of warning rates
and an exponential weight for the minimum violation count. We average
over auxiliary warnings before taking a logarithm. The resulting bound
can be verified by a finite calculation with exact rational arithmetic.
The proof below establishes this comparison directly and formalizes it
in Lean.

\paragraph{Proof outline.}
For a formula $F$, consider the number of clauses violated by each truth
assignment. We denote the smallest of these numbers by $\en(F)$.
Thus $\en(F)=0$ exactly when $F$ is satisfiable. We will show that, for
a random formula at the stated density, this minimum is positive with
high probability.

Set $\alpha=4.268=1067/250$. We first work with a simpler random model:
variables may repeat within a clause, and the number of clauses has a
Poisson distribution with mean $\alpha n$. We compare these formulas
with a system of random constraints, each involving just one variable.
The comparison gives an inequality for the expected minimum number of
violated clauses. An explicit calculation, verified using exact rational
arithmetic, shows that this expectation grows at least linearly with $n$.

We then return to a fixed number $\lfloor\alpha n\rfloor$ of clauses.
The expected minimum remains at least $cn$ for some constant $c>0$.
Changing one clause changes the minimum by at most one, which lets us
bound its variance by $\alpha n$. Chebyshev's inequality then shows
that the probability of a zero minimum is $O(1/n)$. Finally, conditioning
on every clause having three distinct variables increases this bound
by at most a constant factor.

\clearpage
\section{Counting violated constraints and the comparison model}
\subsection{Adding one constraint}
Let $H(\sigma)$ count the constraints violated by a truth assignment
$\sigma\in\{0,1\}^n$, and let $\en(H)=\min_\sigma H(\sigma)$.
We call $H(\sigma)$ its \emph{cost}, and an assignment attaining the
minimum a \emph{minimizer}. For a formula $F$, this gives the quantity
$\en(H)=\en(F)$ from the introduction.

Adding one constraint increases the minimum by zero or one. We need
the exact criterion: the increase is one exactly when every old minimizer violates
the new constraint. If $P$ is the set of assignments violating it and
$\ind_P$ is its indicator (one on $P$, zero elsewhere), then
\begin{equation}\label{eq:integer}
 \en(H+\ind_P)=\en(H)+
 \ind\{\text{every minimizer of }H\text{ belongs to }P\}.
\end{equation}
Any non-minimizing assignment already violates at least one more constraint,
which gives the identity.

For the comparison below we use the weight $e^{-y\en(H)}$, which decreases
with the minimum violation count. Here $y>0$ will be chosen later;
its relation $y=m\beta$ to the partition-function parameters is explained
in Section~\ref{sec:partition}. Write $B_P$ for the indicator on the
right of~\eqref{eq:integer} and $q=1-e^{-y}$. Then
\begin{equation}\label{eq:expinsert}
 e^{-y\en(H+\ind_P)}=e^{-y\en(H)}(1-qB_P).
\end{equation}
\subsection{Replacing clauses with one-variable constraints}
Imagine replacing the second and third literals of a clause by independent
coin tosses. If both come up ``false,'' the first literal must be true to
avoid a violation. Otherwise, there is no constraint. We call this random
one-variable constraint a \emph{site}, following Lelarge and
Oulamara~\cite[Section~4]{LO}. Such one-variable penalties also appear as
univariate factors in Achlioptas and Menchaca-M\'endez~\cite[Section~4]{AMM}.
Our comparison gradually replaces clauses by sites. Once only sites remain,
each variable can be chosen separately to minimize the number of violations.

Here is the precise construction. Start with a random clause on $n\geq1$
variables. Choose its three variable indices $i_1,i_2,i_3$ independently
and uniformly from $\{1,\ldots,n\}$, and choose each literal's sign by a
fair coin toss. Write $J_j$ for the value of variable $i_j$ that makes
literal $j$ false: $J_j=0$ for a positive literal and $J_j=1$ for a
negative literal. The clause's violation indicator is
\[
 C_x(\sigma)=\ind\{\sigma_{i_1}=J_1\}
             \ind\{\sigma_{i_2}=J_2\}\ind\{\sigma_{i_3}=J_3\}.
\]
In this auxiliary model the indices may repeat; we return to clauses
with three distinct variables in Section~\ref{sec:assembly}.

The replacement tosses need not be fair. For each position $j$, independently
choose a probability $R_j$ from the distribution
\begin{equation}\label{eq:law}
 \Prob(R=1/20)=1/4,\qquad
 \Prob(R=2/5)=1/2,\qquad
 \Prob(R=17/20)=1/4.
\end{equation}
Given these probabilities, toss two independent coins, and let $W_j=1$
mean that coin $j$ comes up ``false,'' so
$\Prob(W_j=1\mid R_j)=R_j$ for $j=2,3$.
Replace the last two indicators in $C_x$ by $W_2,W_3$:
\[
 S_{x,W}(\sigma)=\ind\{\sigma_{i_1}=J_1\}W_2W_3.
\]
Thus the site penalizes a false first literal with probability $R_2R_3$,
and otherwise contributes zero. We refer to $W_2,W_3$ as auxiliary
\emph{warnings}. Their probabilities are chosen for the comparison;
they are not inferred from the original formula. The particular
distribution~\eqref{eq:law} gives the numerical bound verified in
Section~\ref{sec:certificate}.

For the calculations below, write $x_j=(i_j,J_j,R_j)$ and call this
triple a \emph{seed}; call the three seeds together a \emph{row}.
Thus a row records a random clause and the probabilities for its replacement.
The clauses themselves do not use the $R_j$, and the replacement above
does not use $R_1$. Different rows are independent.

\subsection{The quantity we compare}
Consider $K$ clauses and $L$ sites, with all literals and coin probabilities
already chosen. Tossing the coins determines which sites impose a constraint.
For each outcome, we find the minimum number of violations over all assignments.
Write $a$ for the collection of coin outcomes and $H_a(\sigma)$ for the
number of constraints violated by assignment $\sigma$. The minimum is
$E_a=\min_\sigma H_a(\sigma)$.

We first average the weight $e^{-yE_a}$ over the coin tosses, keeping the
literals and coin probabilities fixed. Call this average $Z_y$, and its
logarithm $\Phi_y$. We then average $\Phi_y$ over the choices of literals
and coin probabilities---the rows defined above---to obtain $f(K,L)$:
\begin{equation}\label{eq:functional}
 Z_y=\E_a e^{-yE_a},\qquad \Phi_y=\log Z_y,\qquad
 f(K,L)=\E_{\rm rows}\Phi_y.
\end{equation}
Here $\E_a$ averages only over the independent coin tosses with their
chosen probabilities; $\E_{\rm rows}$ averages over the rows.
The order matters: we take the logarithm between the two averages.
The quantity $f(K,L)$ lets us compare systems with different numbers
of clauses and sites.

We will use two simple bounds. At most $K+L$ constraints can be violated,
so $0\leq E_a\leq K+L$ and hence
\begin{equation}\label{eq:growth}
 -y(K+L)\leq\Phi_y\leq0.
\end{equation}
Also, adding one clause or site increases the minimum by at most one
for every outcome of the tosses. The weight $e^{-yE_a}$ therefore
decreases by a factor between $e^{-y}$ and $1$, as does its average $Z_y$.
Taking logarithms, the change in $\Phi_y$ lies in $[-y,0]$.
The same bound holds after averaging over the rows, and thus for $f$.
All the averages in this section are finite sums.

\section{The local cubic inequality}
We compare the effects of adding a clause and adding a site so that we
can replace the original formula by a simpler system while keeping a
bound on its minimum number of violations. Once only sites remain,
this bound can be computed explicitly. The key comparison follows from
a cubic polynomial that factors as $(B-C)^2(B+2C)\geq0$ for $B,C\geq0$.
This inequality controls how our comparison quantity changes during
the replacement.

Keep the literals and coin probabilities fixed. We now reweight the
possible coin outcomes: multiply the probability of each outcome $a$ by
$e^{-yE_a}$, then divide by $Z_y$ so that the probabilities sum to one.
This defines
\[
 \mu_y(a)=\frac{\Prob(a)e^{-yE_a}}{Z_y}.
\]
Here $\Prob(a)$ is the probability of the coin outcomes with the rows
held fixed. Outcomes with fewer unavoidable violations receive greater
weight under $\mu_y$. Let $\langle\cdot\rangle$ denote expectation with
respect to this reweighted distribution.

For a new independent seed $x=(i,J,R)$ define
\[
 b(x,a)=\ind\{\sigma_i=J\text{ for every minimizer of }H_a\}.
\]
Thus $b(x,a)$ is one exactly when the literal is false in every
assignment attaining the minimum violation count.
Write $\Delta_C\Phi_y$ and $\Delta_S\Phi_y$ for the changes caused by
adding a clause or a site from a new independent row $(x_1,x_2,x_3)$:
in each case, the value after addition minus the value before.
We use $\E_x$ for an average over one seed and $\E_{x_1,x_2,x_3}$
for an average over the three independent seeds of this new row.

A clause is violated by every minimizer exactly when each of its three
literals is falsified by every minimizer. This is the elementary
commutation of the two universal quantifiers over minimizers and literal
positions; it remains valid when variable positions repeat. Consequently,
\eqref{eq:expinsert} gives the clause increment
\begin{equation}\label{eq:clause}
 \Delta_C\Phi_y=\log\bigl(1-q\langle b(x_1,a)b(x_2,a)b(x_3,a)\rangle\bigr).
\end{equation}
For a site, first average its two new coin tosses, which are independent
of all existing coin outcomes. Their product has mean $R_2R_3$, hence
\begin{equation}\label{eq:site}
 \Delta_S\Phi_y=\log\bigl(1-q\langle b(x_1,a)R_2R_3\rangle\bigr).
\end{equation}
Define the correction
\begin{equation}\label{eq:dy}
 d_y=\E\log(1-qR_1R_2R_3),
\end{equation}
where the rates are independent copies of~\eqref{eq:law}.

\begin{lemma}\label{lem:cubic}
Averaging over the new independent row gives
\begin{equation}\label{eq:local}
 \E_{x_1,x_2,x_3}\Delta_C\Phi_y-3\E_{x_1,x_2,x_3}\Delta_S\Phi_y+2d_y\leq0.
\end{equation}
\end{lemma}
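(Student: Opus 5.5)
The plan is the standard replica expansion of the logarithms, followed by a pointwise application of the cubic $(B-C)^2(B+2C)\geq0$. Throughout, the existing rows (and hence $\mu_y$) are held fixed, and only the new row is averaged. The final inequality is then preserved on averaging over the old rows as well.

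\emph{Step 1: series expansion.} Since $0<q=1-e^{-y}<1$ and every bracket in \eqref{eq:clause}, \eqref{eq:site}, \eqref{eq:dy} lies in $[0,1]$, we may write $\log(1-qX)=-\sum_{k\geq1}q^kX^k/k$. This series converges absolutely and uniformly in $X\in[0,1]$. All averages involved are finite sums, so we can exchange $\E_{x_1,x_2,x_3}$ with the sum over $k$. It therefore suffices to show, for each fixed $k\geq1$,
\[
 \E_{x_1,x_2,x_3}\langle b(x_1,a)b(x_2,a)b(x_3,a)\rangle^k
 -3\,\E_{x_1,x_2,x_3}\langle b(x_1,a)R_2R_3\rangle^k
 +2\,\E\,(R_1R_2R_3)^k\;\geq\;0,
\]
because the $k$-th term enters with the negative coefficient $-q^k/k$.

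\emph{Step 2: replicas.} Write $\langle X\rangle^k=\langle X(a^1)\cdots X(a^k)\rangle^{\otimes k}$, where $a^1,\dots,a^k$ are independent draws from $\mu_y$. Note that $b(x,a)$ depends only on $(i,J)$, not on $R$. Define
\[
 B(a^1,\dots,a^k)=\E_{i,J}\prod_{l=1}^k b\bigl((i,J),a^l\bigr)\in[0,1],
 \qquad r_k=\E R^k .
\]
Since the three seeds are independent and independent of the old system, Fubini gives the following identifications.
\begin{itemize}
\item The clause term equals $\langle B^3\rangle^{\otimes k}$.
\item The site term equals $\langle B\rangle^{\otimes k}\,r_k^2$, because $R_2^k$ and $R_3^k$ factor out of the replica product.
\item The $d_y$ term equals $r_k^3$.
\end{itemize}
The required inequality thus becomes
\[
\bigl\langle B^3-3Br_k^2+2r_k^3\bigr\rangle^{\otimes k}\geq0 .
\]

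\emph{Step 3: the cubic.} Pointwise, $B^3-3Br_k^2+2r_k^3=(B-r_k)^2(B+2r_k)\geq0$, since $B,r_k\geq0$. Averaging over the replicas preserves the sign, and summing over $k$ with the coefficients $-q^k/k$ yields \eqref{eq:local}.

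The only delicate point is Step 2. We must check that the replica representation really factorizes. This requires that the new seeds are independent of the coin outcomes $a$ and of each other, and that the clause and site increments use the same $\mu_y$, so that one $B$ serves both. Repeated variable indices cause no trouble, because $B$ is defined seedwise and the clause criterion \eqref{eq:clause} was already justified with repetitions allowed.
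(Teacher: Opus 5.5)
Your proposal is correct and follows the paper's proof essentially step for step. You expand the logarithm termwise, represent the $k$-th power through $k$ independent copies from $\mu_y$ with $B=\E_x\prod_t b(x,a_t)$ and $C=\E R^k$, and conclude from the pointwise factorization $B^3-3BC^2+2C^3=(B-C)^2(B+2C)\geq0$, weighted by $-q^k/k$.
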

\begin{proof}
For $0\leq z\leq1$,
$\log(1-qz)=-\sum_{h=1}^{\infty}q^hz^h/h$. The absolute sum is bounded by
$\sum_{h=1}^{\infty}q^h/h=y$, uniformly in $z$, so it may be averaged termwise.
For each positive integer $h$, corresponding to the $h$-th term of this
series, sample $h$ independent copies $a_1,\ldots,a_h$ of the coin
outcomes from the distribution $\mu_y$, and set
\[
 B_h=\E_x\prod_{t=1}^h b(x,a_t),\qquad C_h=\E R^h.
\]
The seed $x$ is shared across the $h$ factors in $B_h$. Write
$\langle\cdot\rangle_h$ for expectation over the independent copies
$a_1,\ldots,a_h$. Expanding each power using these copies and averaging
the three new seeds independently gives
\begin{align*}
 \E_{x_1,x_2,x_3}\bigl[\langle b(x_1,a)b(x_2,a)b(x_3,a)\rangle^h\bigr]
   &=\langle B_h^3\rangle_h,\\
 \E_{x_1,x_2,x_3}\bigl[\langle b(x_1,a)R_2R_3\rangle^h\bigr]
   &=\langle B_h C_h^2\rangle_h,\\
 \E_{x_1,x_2,x_3}[(R_1R_2R_3)^h]&=C_h^3.
\end{align*}
Thus the three terms in~\eqref{eq:local} combine, at each power $h$,
into the expectation of the following nonnegative expression:
\[
 B_h^3-3B_hC_h^2+2C_h^3=(B_h-C_h)^2(B_h+2C_h)\geq0.
\]
Multiplying by $-q^h/h\leq0$ and summing over $h$ proves the claim.
\end{proof}

Averaging the existing rows in~\eqref{eq:local} yields
\begin{equation}\label{eq:difference}
 f(K+1,L)-3f(K,L+1)+2f(K,L)+2d_y\leq0.
\end{equation}
This factorization is the cubic case of the polynomial remainder
in Lelarge and Oulamara~\cite[condition (4), equation (30)]{LO}.

\section{From clauses to sites with Poisson counts}
We now apply Section~3's inequality repeatedly to compare a system
containing only clauses with one containing only sites. We first make
finitely many additions, then average over their number. Using a Poisson
distribution for this number gives the random clause and site counts
needed below. Here $\Pois(a)$ denotes the Poisson distribution with mean $a$.
The next comparison applies to any function with the stated properties.

\begin{lemma}\label{lem:comparison}
Suppose $f:\mathbb{N}^2\to\mathbb{R}$ has uniformly bounded increments
in each coordinate: increasing either argument by one changes $f$ by
at most a fixed constant in absolute value. Suppose also that it
satisfies~\eqref{eq:difference} with a constant $d$ in place of $d_y$.
For every $a\geq0$,
\begin{equation}\label{eq:comparison}
 \E_{K\sim\Pois(a)}f(K,0)
 \leq\E_{L\sim\Pois(3a)}f(0,L)-2ad.
\end{equation}
\end{lemma}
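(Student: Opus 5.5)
The plan is the standard Poisson interpolation. For $t\in[0,a]$ define
\[
 \varphi(t)=\E\, f(K_t,L_t)+2td,\qquad K_t\sim\Pois(a-t),\ L_t\sim\Pois(3t)\ \text{independent}.
\]
Then $\varphi(0)=\E_{K\sim\Pois(a)}f(K,0)$ and $\varphi(a)=\E_{L\sim\Pois(3a)}f(0,L)+2ad$. Inequality~\eqref{eq:comparison} is exactly $\varphi(0)\leq\varphi(a)$, so it suffices to show $\varphi'(t)\geq0$ on $(0,a)$. The case $a=0$ is trivial.

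\textbf{Step 1: differentiating under the sum.} The Poisson probabilities satisfy $\frac{d}{ds}\Prob(\Pois(s)=k)=\Prob(\Pois(s)=k-1)-\Prob(\Pois(s)=k)$. Because the increments of $f$ are bounded, $|f(K,L)|\leq |f(0,0)|+c(K+L)$. This linear growth lets us differentiate the double series termwise, since the derivative series converges locally uniformly in $t$ by Poisson moment bounds. Rearranging the sums then gives
\[
 \varphi'(t)=-\E\bigl[f(K_t+1,L_t)-f(K_t,L_t)\bigr]+3\,\E\bigl[f(K_t,L_t+1)-f(K_t,L_t)\bigr]+2d .
\]

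\textbf{Step 2: sign of the derivative.} The bracketed combination equals
\[
 -\bigl[f(K+1,L)-3f(K,L+1)+2f(K,L)+2d\bigr],
\]
evaluated at $(K,L)=(K_t,L_t)$. By~\eqref{eq:difference} with $d$ in place of $d_y$, this is $\geq0$ pointwise in $(K,L)$. Hence $\varphi'(t)\geq0$.

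\textbf{Step 3: conclusion.} Continuity of $\varphi$ on $[0,a]$ follows from the same uniform convergence. The mean value theorem then gives $\varphi(0)\leq\varphi(a)$.

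\textbf{Main obstacle.} The only delicate point is justifying termwise differentiation and the index shifts in Step 1. I would handle it with dominated convergence, using the bound $|f(K,L)|\leq |f(0,0)|+c(K+L)$ together with $\sum_k(1+k)s^k/k!<\infty$ uniformly for $s$ in compact sets.

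As an alternative that avoids calculus, one could couple through binomial thinning. Let $N\sim\Pois(a)$ and choose each of the $N$ items independently to be a clause with probability $1-t/a$, or three sites otherwise. Conditioning on $N$ reduces the claim to a finite telescoping use of~\eqref{eq:difference}. However, the derivative argument is shorter, and it is the one I would write.
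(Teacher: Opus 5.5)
\textbf{Sign error.} As written, your argument breaks at a sign. With $\varphi(t)=\E f(K_t,L_t)+2td$ you get $\varphi(a)=\E_{L\sim\Pois(3a)}f(0,L)+2ad$. So $\varphi(0)\le\varphi(a)$ gives a bound with $+2ad$, which is not \eqref{eq:comparison} (that has $-2ad$). The identity in Step~2 is also wrong. In fact
\[
 -\bigl[f(K+1,L)-f(K,L)\bigr]+3\bigl[f(K,L+1)-f(K,L)\bigr]+2d=-\bigl[f(K+1,L)-3f(K,L+1)+2f(K,L)-2d\bigr],
\]
so \eqref{eq:difference} only yields $\varphi'(t)\ge 4d$. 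In the application $d=d_y=\E\log(1-qR_1R_2R_3)<0$, so this does not give $\varphi'\ge0$.

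The repair is to take $\varphi(t)=\E f(K_t,L_t)-2td$. Then $\varphi(a)=\E f(0,L)-2ad$, and
\[
\varphi'(t)=-\E\bigl[f(K_t+1,L_t)-3f(K_t,L_t+1)+2f(K_t,L_t)+2d\bigr]\ge0 .
\]
With this correction your argument is a valid proof, including the domination via $|f(K,L)|\le|f(0,0)|+c(K+L)$.

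\textbf{Comparison with the paper.} Your route differs from the paper's. The paper never differentiates. It sets $T_N(K,L)=\sum_j\binom Nj2^{N-j}f(K+j,L)$ and reads \eqref{eq:difference} as $f(K+1,L)+2f(K,L)\le 3f(K,L+1)-2d$. Combined with Pascal's identity $T_{N+1}(K,L)=T_N(K+1,L)+2T_N(K,L)$, induction gives, for each fixed $N$, $\E f(J,0)\le f(0,N)-2Nd/3$ with $J\sim\Bin(N,1/3)$. It then averages over $N\sim\Pois(3a)$, using that $\Bin(N,1/3)$-thinning of $\Pois(3a)$ is $\Pois(a)$. In other words, each of $\Pois(3a)$ site slots becomes a clause with probability $1/3$ and vanishes otherwise.

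What each route buys:
\begin{itemize}
\item The paper's version reduces everything to a finite inequality for each $N$ plus one absolutely convergent Poisson average. No analytic interchange beyond that is needed, which suits the finite-replacement formalization.
\item Yours is the standard continuous Poisson interpolation. It is shorter to state, but it needs the termwise differentiation and continuity bookkeeping.
\end{itemize}

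\textbf{Your alternative sketch.} It is not the paper's discrete argument, and it would not work as stated. Turning a clause into three sites produces $3\cdot\Pois(a)$ sites rather than $\Pois(3a)$. Moreover, \eqref{eq:difference} does not by itself compare $f(K+1,L)$ with $f(K,L+3)$.
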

\begin{proof}
Define
\[
 T_N(K,L)=\sum_{j=0}^N\binom Nj2^{N-j}f(K+j,L).
\]
After division by $3^N$, this is the average of $f(K+J,L)$ for
$J\sim\Bin(N,1/3)$: the number of successes in $N$ independent trials,
each succeeding with probability $1/3$. It therefore averages over a
random number of clause additions.
Pascal's identity gives
$T_{N+1}(K,L)=T_N(K+1,L)+2T_N(K,L)$. Induction implies
\begin{equation}\label{eq:binomial}
 T_N(K,L)\leq3^N\bigl(f(K,L+N)-2Nd/3\bigr).
\end{equation}
Indeed, the case $N=0$ is equality. Applying the induction hypothesis to
both summands of $T_{N+1}$ gives an upper bound
\[
 3^N\bigl(f(K+1,L+N)+2f(K,L+N)-2Nd\bigr).
\]
The assumed local inequality at $(K,L+N)$ bounds this by
$3^{N+1}(f(K,L+N+1)-2(N+1)d/3)$.

At $K=L=0$, divide~\eqref{eq:binomial} by $3^N$ and average
$N\sim\Pois(3a)$. Conditional on $N$, the left side averages
$f(J,0)$ with $J\sim\Bin(N,1/3)$. For $N=j+\ell$, its weight is
\[
 e^{-3a}\frac{(3a)^N}{N!}\binom Nj(1/3)^j(2/3)^\ell
 =e^{-3a}\frac{a^j}{j!}\frac{(2a)^\ell}{\ell!}.
\]
Summing $\ell$ gives the $\Pois(a)$ weight of $j$, whereas
$\E N=3a$. Bounded increments give at most linear growth, justifying
all rearrangements by absolute convergence. The case $a=0$ is immediate.
\end{proof}

Let $F^{\rm rep}_K$ have $K$ clauses whose variable positions are sampled
with replacement, so variables may repeat within a clause; this is what
the superscript ``rep'' denotes. With no sites there are no auxiliary
coin tosses to average over, so $f(K,0)=-y\E\en(F^{\rm rep}_K)$. With $a=\alpha n$,
Lemma~\ref{lem:comparison} becomes
\begin{equation}\label{eq:poissonenergy}
 -\frac yn\E\en(F^{\rm rep}_{\Pois(\alpha n)})
 \leq\frac1n\E_{L\sim\Pois(3\alpha n)}f(0,L)-2\alpha d_y.
\end{equation}
Here a Poisson subscript means that the count is sampled independently
and then the indicated number of independent clauses is sampled.

\section{The site-only system and its moments}
To bound the site-only quantity in Section~4, we calculate the probability
that a variable receives conflicting constraints: some active sites
penalize zero and others penalize one. Such a variable must violate at
least one site, whichever value we assign to it.

\subsection{Conflicting constraints}
Hold the site rows fixed. Let $\mathcal S_{v,s}$ be the set of sites
whose retained variable is $v$ and whose penalized value is $s\in\{0,1\}$.
Let $N_{v,s}$ count the active sites in this set, meaning those for which
both replacement coins come up ``false.''
Every assignment violates at least one site at a variable where both
$N_{v,0}$ and $N_{v,1}$ are positive. Therefore the minimum number of violated sites
is at least the number of such variables.

Let $A_v$ be the probability, over the coin tosses, that no active site
penalizes zero at $v$, and let $B_v$ be the corresponding probability for
one. Let $G_v$ be the probability that at least one of these values is
unpenalized. Then
\[
 A_v=\prod_{\mathcal S_{v,0}}(1-R_2R_3),\quad
 B_v=\prod_{\mathcal S_{v,1}}(1-R_2R_3),\quad
 G_v=A_v+B_v-A_vB_v.
\]
Each factor is the probability that the corresponding site is inactive;
an empty product is one. The two groups use different, independent coin
tosses, which gives the displayed expression for $G_v$.
Different variables also use different, independent tosses. Writing
$\delta=e^{-y}$, the lower bound on violations and this independence imply
\begin{equation}\label{eq:vertex}
 \Phi_y\leq\sum_{v=1}^n\log\bigl(G_v+(1-G_v)\delta\bigr).
\end{equation}
This is an inequality: counting conflicting variables need not give the
exact minimum number of violations.

To obtain a finite calculation with a controlled error, we retain only
$T$ terms of the logarithmic series. The following bound controls the
error from this step.

\begin{lemma}\label{lem:truncate}
For $x\in[0,1]$, $\delta\in(0,1]$, and an integer $T\geq1$,
\begin{equation}\label{eq:truncate}
 \log(1-(1-\delta)x)\leq-\sum_{k=1}^T\frac{x^k}{k}+T\delta.
\end{equation}
\end{lemma}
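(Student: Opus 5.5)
The plan is to expand the left side of~\eqref{eq:truncate} as a power series in $(1-\delta)x$ and compare it with the truncated series term by term. Since $0\le(1-\delta)x\le1$, and with $(1-\delta)x<1$ unless $\delta$ is at the endpoint excluded below, we have
\[
 \log(1-(1-\delta)x)=-\sum_{k=1}^{\infty}\frac{(1-\delta)^kx^k}{k}.
\]
Every term is nonpositive. Dropping all terms with $k>T$ therefore only increases the right side, which gives
\[
 \log(1-(1-\delta)x)\le-\sum_{k=1}^{T}\frac{(1-\delta)^kx^k}{k}.
\]
The case $(1-\delta)x=1$ would need $\delta=0$ and $x=1$, which is excluded since $\delta\in(0,1]$. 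So the series converges and no boundary issue arises.

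Next we bound each retained term using Bernoulli's inequality, $(1-\delta)^k\ge1-k\delta$ for $\delta\in[0,1]$ and integer $k\ge1$. Multiplying by $-x^k/k\le0$ reverses it:
\[
 -\frac{(1-\delta)^kx^k}{k}\le-\frac{x^k}{k}+\delta x^k\le-\frac{x^k}{k}+\delta,
\]
where the last step uses $x^k\le1$. Summing over $k=1,\dots,T$ produces exactly the right side of~\eqref{eq:truncate}: the error terms contribute at most $T\delta$.

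No step is a real obstacle. The only point needing care is the sign bookkeeping: truncation is safe because all series terms are nonpositive, and Bernoulli's inequality is applied in the direction that survives multiplication by the negative factor $-x^k/k$. If a sharper error were wanted, one could keep $\delta\sum_{k\le T}x^k$ instead of $T\delta$, but the stated bound only needs the crude estimate.
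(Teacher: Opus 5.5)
Your proposal is correct and matches the paper's proof: truncate the nonpositive logarithmic series to get $-\sum_{k=1}^T(1-\delta)^kx^k/k$, then use Bernoulli's inequality $(1-\delta)^k\ge1-k\delta$ and $x^k\le1$ to bound the total correction by $T\delta$. Your remark that $(1-\delta)x<1$ always holds for $\delta\in(0,1]$ correctly disposes of the boundary cases $x=1$ and $\delta=1$, which the paper also notes are covered.
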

\begin{proof}
Truncating the logarithmic series gives an upper bound
$-\sum_{k=1}^T(1-\delta)^kx^k/k$. Bernoulli's inequality gives
$(1-\delta)^k\geq1-k\delta$, and $x^k\leq1$ bounds the total correction
by $T\delta$. This also covers $x=1$ and $\delta=1$.
\end{proof}

\subsection{Mixed Poisson moments}
Let $L\sim\Pois(3\alpha n)$ and fix a variable $v$. We need expectations
of powers, called moments, of the probabilities $A_v$ and $B_v$.
Write $M_j=\E A_v^j$ and $\mu_k=\E(1-A_v)^k$, where the expectations
include the random site count and rows. We will show that these moments
can be computed as follows:
\begin{equation}\label{eq:moments}
 \nu_j=\E(1-RR')^j,\qquad
 M_j=\exp\bigl((3\alpha/2)(\nu_j-1)\bigr),\qquad
 \mu_k=\sum_{j=0}^k(-1)^j\binom kj M_j.
\end{equation}
All rate copies in the same expectation are independent. For independent
rows $X_i$ and a function $0\leq h\leq1$, conditioning on $L$ gives
\[
 \E\prod_{i=1}^L h(X_i)
 =\exp\bigl(3\alpha n(\E h(X)-1)\bigr).
\]
A site belongs to each of $\mathcal S_{v,0}$ and $\mathcal S_{v,1}$
with probability $1/(2n)$, and cannot belong to both. Its contribution to $A_v^jB_v^\ell$
therefore has mean
$1+(\nu_j-1)/(2n)+(\nu_\ell-1)/(2n)$. Hence
\begin{equation}\label{eq:mixed}
 \E[A_v^jB_v^\ell]
 =\exp\bigl((3\alpha/2)(\nu_j+\nu_\ell-2)\bigr)=M_jM_\ell.
\end{equation}
Taking $\ell=0$ verifies the formula for $M_j$; expanding $(1-A_v)^k$
verifies the formula for $\mu_k$. These expectations of products of
powers are the mixed moments. Expanding both factors now yields
\begin{equation}\label{eq:musquare}
 \E\bigl[((1-A_v)(1-B_v))^k\bigr]=\mu_k^2.
\end{equation}
In particular $\mu_k=\E(1-A_v)^k\geq0$.

Apply Lemma~\ref{lem:truncate} with $x=(1-A_v)(1-B_v)=1-G_v$,
then average~\eqref{eq:vertex} and use~\eqref{eq:musquare}. We obtain
\begin{equation}\label{eq:sitebound}
 \frac1n\E_{L\sim\Pois(3\alpha n)} f(0,L)
 \leq-\sum_{k=1}^T\frac{\mu_k^2}{k}+Te^{-y}.
\end{equation}
Only a finite polynomial has been integrated. In particular, no
integrability claim for $\log G_v$ or $1/G_v$ is needed.

Since $q\leq1$ and $R_1R_2R_3<1$, equation~\eqref{eq:dy} implies
$d_y\geq\E\log(1-R_1R_2R_3)$. Define
\begin{equation}\label{eq:psi}
\Psi_T=-\sum_{k=1}^T\frac{\mu_k^2}{k}
             -2\alpha\E\log(1-R_1R_2R_3).
\end{equation}
The quantity $\Psi_T$ collects the two parts of our bound: the truncated
sum comes from the system of one-variable constraints, and the second
term is the correction for replacing clauses by those constraints.
It depends only on the density, the chosen distribution of coin
probabilities $R$, and the number $T$ of retained terms. We choose this
distribution to obtain a negative bound.
Combining~\eqref{eq:poissonenergy} and~\eqref{eq:sitebound} proves
\begin{proposition}\label{prop:energy}
For every $n\geq1$, $y>0$, and $T\geq1$,
\begin{equation}\label{eq:energybound}
 -\frac yn\E\en(F^{\rm rep}_{\Pois(\alpha n)})\leq\Psi_T+Te^{-y}.
\end{equation}
\end{proposition}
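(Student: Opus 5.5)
The plan is to chain the two inequalities already established: the Poisson comparison \eqref{eq:poissonenergy}, and the site-only bound \eqref{eq:sitebound}. Then we replace the correction $d_y$ by the $y$-independent quantity appearing in $\Psi_T$.

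First, invoke Lemma~\ref{lem:comparison} for the function $f$ of \eqref{eq:functional}. Its hypotheses need checking.
\begin{itemize}
\item Bounded increments: adding a clause or site changes $f$ by an amount in $[-y,0]$, as noted after \eqref{eq:growth}.
\item The local inequality: \eqref{eq:difference} follows from Lemma~\ref{lem:cubic} by averaging over the existing rows, with $d=d_y$.
\end{itemize}
With $a=\alpha n$ and the identification $f(K,0)=-y\E\en(F^{\rm rep}_K)$, this gives \eqref{eq:poissonenergy}. The left side is then bounded by $\frac1n\E_{L\sim\Pois(3\alpha n)}f(0,L)-2\alpha d_y$.

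Second, bound the site term by \eqref{eq:sitebound}, which gives $-\sum_{k\le T}\mu_k^2/k+Te^{-y}$. The ingredients of that bound are:
\begin{itemize}
\item the conflict lower bound \eqref{eq:vertex}, which holds pointwise in the rows;
\item Lemma~\ref{lem:truncate} applied with $x=1-G_v$ and $\delta=e^{-y}$;
\item the moment identity \eqref{eq:musquare}.
\end{itemize}
Since \eqref{eq:vertex} sums over $n$ variables and the laws are exchangeable in $v$, dividing by $n$ gives the per-variable bound. The error term is $T\delta$ per variable, i.e.\ $Te^{-y}$ after dividing by $n$.

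Third, handle the correction. Since $y>0$ we have $0<q=1-e^{-y}\le 1$, and $R_1R_2R_3\le (17/20)^3<1$. So $\log(1-qR_1R_2R_3)\ge\log(1-R_1R_2R_3)$ pointwise, and all quantities are finite. Hence $d_y\ge\E\log(1-R_1R_2R_3)$, so $-2\alpha d_y\le-2\alpha\E\log(1-R_1R_2R_3)$. Adding the three bounds yields exactly $\Psi_T+Te^{-y}$, which is \eqref{eq:energybound}.

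The only genuinely delicate point is the first step. It requires that $f(K,L)$, as defined with the logarithm taken between the two averages, really satisfies \eqref{eq:difference} uniformly in $(K,L)$. It also requires that the Poisson averaging in Lemma~\ref{lem:comparison} is legitimate, and the bounded increments give the linear growth that justifies this. Both facts are already established earlier in the paper, so the proof reduces to bookkeeping. I would double-check two sign conventions: that the inequality direction survives multiplication by $-y/n$, and that $d_y$ enters with the sign used in \eqref{eq:poissonenergy}.
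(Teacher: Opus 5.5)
Your proposal is correct and matches the paper's proof: the paper also chains \eqref{eq:poissonenergy} with \eqref{eq:sitebound}, then uses $q\leq1$ and $R_1R_2R_3<1$ in \eqref{eq:dy} to get $d_y\geq\E\log(1-R_1R_2R_3)$, which turns $-2\alpha d_y$ into the correction term of $\Psi_T$. The two sign conventions you wanted to double-check come out as you wrote them.
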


\section{The rational certificate}\label{sec:certificate}
We now prove that the bound from Section~5 is negative using exact
rational calculations. This finite calculation is our numerical
certificate. For $\alpha=4.268$, the distribution~\eqref{eq:law}, and
$T=128$, the required inequality is
\begin{equation}\label{eq:negative}
 \Psi_{128}\leq-\frac1{100000}=-\varepsilon.
\end{equation}
We specify a rational calculation proving this inequality. The displayed
decimal approximations below are explanatory only.

Represent the rate law by four equally likely labels
\[
 (r_0,r_1,r_2,r_3)=(1/20,17/20,2/5,2/5).
\]
Then $\nu_j=\frac1{16}\sum_{i,\ell=0}^3(1-r_ir_\ell)^j$ is rational.
Put $x_j=(3\alpha/2)(\nu_j-1)$ for $0\leq j\leq128$.

\subsection{Exponential enclosures}
An enclosure is an interval guaranteed to contain the true value.
We round its lower endpoint down and its upper endpoint up to multiples
of $1/s$, where $s$ is a positive integer. This outward rounding keeps
the true value inside the interval. The rounding functions are
$D_s(z)=\lfloor sz\rfloor/s$ and $U_s(z)=-D_s(-z)$.
Define intervals recursively by
\begin{align*}
 I_0(x)&=[\max(0,1+x),\,1/(1-x)],\\
 I_{t+1}(x)&=[D_s(l^2),U_s(u^2)],
       \qquad [l,u]=I_t(x/2).
\end{align*}
For $x<2^t$, the interval $I_t(x)$ encloses $e^x$ and has nonnegative
endpoints. At $t=0$, use $1+x\leq e^x$ and
$1-x\leq e^{-x}$; the latter can be inverted since $x<1$.
Induction uses $e^x=(e^{x/2})^2$ and outward rounding.

Compute $I_{160}(x_j)$ on the grid $s=2^{320}$, then round its endpoints
outward on the grid $2^{160}$ to obtain $[l_j,u_j]$. The domain condition
holds because $x_j\leq0$. For $c_{kj}=(-1)^j\binom kj$, put
\[
 L_k=\sum_{j=0}^k c_{kj}
 \begin{cases}l_j,&c_{kj}\geq0,\\u_j,&c_{kj}<0,\end{cases}
 \qquad V=-\sum_{k=1}^{128}\frac{\max(0,L_k)^2}{k}.
\]
Using lower endpoints for positive coefficients and upper endpoints
for negative coefficients gives $L_k\leq\mu_k$. Since $\mu_k\geq0$,
$\max(0,L_k)^2\leq\mu_k^2$, so $V$ bounds the site contribution from above.
We replace a negative lower bound by zero before squaring; squaring
the negative bound itself would not give a valid lower bound on $\mu_k^2$.

\subsection{Logarithmic enclosures and exact arithmetic}
For $0\leq z<1$, let
\[
 \lambda(z)=-\sum_{h=1}^{32}\frac{z^h}{h}-\frac{z^{33}}{1-z}.
\]
Bounding the remaining logarithmic-series tail by a geometric series
proves $\lambda(z)\leq\log(1-z)$. Thus
\[
 W=-\frac{2\alpha}{64}\sum_{i,j,\ell=0}^3
                   \lambda(r_ir_jr_\ell)
\]
is an upper bound on the clause-replacement correction. The resulting entirely
rational inequality is
\begin{equation}\label{eq:rational}
 V+W\leq-1/100000.
\end{equation}
The rational inequality~\eqref{eq:rational} is verified in Lean;
the analytic enclosure lemmas then imply~\eqref{eq:negative}.
Technical verification details appear in Appendix~\ref{app:verification}.
The exponential intervals are computed within Lean from their recurrence;
no table supplied by an external program is a premise of the theorem.

An independent Python implementation uses rational Taylor enclosures
for the exponential and a different logarithm enclosure. Its upper
bounds, rounded for display, are
\begin{center}
\begin{tabular}{lr}
\toprule
Contribution & Independent upper bound \\
\midrule
Site contribution & $-0.7611927541550103$\\
Clause-replacement correction & $+0.7611815516097719$\\
Sum & $-0.000011202545238429759$\\
\bottomrule
\end{tabular}
\end{center}
These numbers cross-check the calculation; they are not inputs to Lean
and are not claimed to be exact values of the real expression $\Psi_{128}$.

\section{Fixed clause count and three distinct variables per clause}\label{sec:assembly}
We now return from the auxiliary model to a fixed number of clauses,
each containing three distinct variables, and prove unsatisfiability
with high probability. Let $F_n$ denote the random formula in
Theorem~\ref{thm:main}.
Choose one finite $y>0$ such that $128e^{-y}\leq\varepsilon/4$, and keep
it fixed as $n$ grows. For example, $y=\log(512\cdot10^5)$ gives
$128e^{-y}=\varepsilon/4$ exactly. Let $M=\lfloor\alpha n\rfloor$ and
$N\sim\Pois(\alpha n)$. Generate one infinite sequence of independent
clauses and compare the formulas formed by its first $N$ and first $M$
clauses. Each additional clause changes the minimum violation count by
at most one, so
\begin{align}
 \bigl|\E\en(F^{\rm rep}_N)-\E\en(F^{\rm rep}_M)\bigr|
 &\leq\E|N-M|\nonumber\\
 &\leq\sqrt{\E(N-M)^2}\leq\sqrt{\alpha n+1}.\label{eq:depois}
\end{align}
Here $\E(N-M)^2=\alpha n+(\alpha n-M)^2$. Equations
\eqref{eq:energybound}, \eqref{eq:negative}, and~\eqref{eq:depois} yield
\[
 -\frac yn\E\en(F^{\rm rep}_M)
 \leq-\varepsilon+128e^{-y}+\frac yn\sqrt{\alpha n+1}.
\]
Eventually the last term is at most $\varepsilon/4$, so, for
$c=\varepsilon/(2y)>0$,
\begin{equation}\label{eq:linear}
 \E\en(F^{\rm rep}_M)\geq cn
\end{equation}
for all sufficiently large $n$. This fixes $y$ before taking $n\to\infty$;
there is no exchange of two limits.

A positive expected minimum alone does not establish unsatisfiability
with high probability. We must also bound fluctuations around that mean.
Replacing one clause changes every assignment's violation count by at
most one, hence changes its minimum by at most one. Reveal the $M$ clauses
one at a time. After the first $j$ clauses are revealed, take the
conditional expectation of the final minimum $\en(F^{\rm rep}_M)$.
This sequence of conditional expectations is the Doob martingale. Conditional on the preceding clauses, the
conditional expectations corresponding to different values of the next
clause differ by at most one: use the same unrevealed clauses in both cases.
Each martingale increment therefore has absolute value at most one.
Each increment has conditional mean zero. Distinct increments have zero
covariance, so their variances add, giving
\begin{equation}\label{eq:variance}
 \Var(\en(F^{\rm rep}_M))\leq M\leq\alpha n.
\end{equation}
The formal proof uses the equivalent finite-product variance decomposition.
Chebyshev's inequality and~\eqref{eq:linear} imply eventually
\begin{equation}\label{eq:chebyshev}
 \Prob(\en(F^{\rm rep}_M)=0)\leq\frac{\alpha}{c^2n}.
\end{equation}

Let $A$ be the event that each of the $M$ clauses has distinct variable
positions. Conditional on $A$, each clause has three uniformly chosen
distinct variables, and clauses and signs remain independent, with fair
signs. This is exactly the original model. Conditioning increases our
probability bound by at most the constant below: for $n\geq6$,
\begin{equation}\label{eq:conditioning}
 \Prob(A)^{-1}=
 \left(\frac{n^2}{(n-1)(n-2)}\right)^M
 \leq e^{6M/n}\leq e^{6\alpha}.
\end{equation}
Indeed, for $r=n^2/((n-1)(n-2))$, use
$\log r\leq r-1\leq6/n$; the latter inequality reduces to
$3n^2-16n+12\geq0$, valid for $n\geq6$.
Since $\Prob(B\mid A)\leq\Prob(B)/\Prob(A)$ for any event $B$,
\eqref{eq:chebyshev} and~\eqref{eq:conditioning} give
\[
 \Prob(F_n\text{ is satisfiable})
 \leq e^{6\alpha}\frac{\alpha}{c^2n}\longrightarrow0.
\]
This proves Theorem~\ref{thm:main}. The conditioning constant is deliberately
coarse; the argument supplies an asymptotic conclusion, not a useful
small-instance estimate.

The conclusion also holds when repeated clauses are forbidden.
There are $N_n=8\binom n3$ distinct signed clauses, ignoring the order
of their literals. Among $M=\lfloor\alpha n\rfloor$ independent samples,
the probability of any repetition is at most $\binom M2/N_n=O(1/n)$.
Conditioning on no repetitions gives uniform sampling without replacement,
and divides the satisfiability bound by a probability tending to one.

\section{Relation to the partition-moment formulation}\label{sec:partition}
This section connects our proof to earlier interpolation formulas;
it is not needed to prove the main theorem.
Keep the same cost function $H_a$ and the same probabilities for the
auxiliary coin outcomes. For fixed outcomes $a$, define the
\emph{partition function}
\[
 Z_\beta(a)=\sum_\sigma e^{-\beta H_a(\sigma)},\qquad \beta\geq0.
\]
The parameter $\beta$ penalizes violations: larger values give less
weight to assignments with more violations. Unlike $Z_y$, which averages
a weight of the minimum over coin outcomes, $Z_\beta(a)$ sums weights
over all assignments for fixed coin outcomes. There are $2^n$ assignments,
and at least one attains the minimum, so
\[
 e^{-\beta E_a}\leq Z_\beta(a)\leq2^n e^{-\beta E_a}.
\]
Introduce an exponent $m>0$. Raise the bounds to this power, average
over the auxiliary coin outcomes, and take logarithms (with $\beta>0$):
\begin{equation}\label{eq:bridge}
 \Phi_{m\beta}\leq\log\E_a[Z_\beta(a)^m]
                  \leq\Phi_{m\beta}+mn\log2.
\end{equation}
Thus at fixed $m\beta=y$, the difference between the middle expression
and $\Phi_y$, divided by $n$, is at most $m\log2$.
This two-sided bound, also formalized in Lean, relates our comparison
quantity to the logarithm of an averaged power of a partition function,
as in
Franz--Leone~\cite[equation (45)]{FL}.

The comparison in Section~4 retains the clause/site balance of their
Poisson interpolation~\cite[equations (51)--(54)]{FL}, but proves it
using finite replacement. Equation~\eqref{eq:bridge} uses an identical
distribution of auxiliary coin outcomes on its two sides; it does not
identify that distribution with the reweighted measure in their
equation (46). Nor do we claim the exact
integer-cost implementation appears in that paper. The direct insertion
identity is sufficient here, at the cost of restricting the argument to
integer unit costs rather than a general finite-temperature model.
For clauses with an odd number of literals, the cited Franz--Leone
preprint makes additional assumptions about overlaps, quantities
measuring correlations between assignments. Our comparison uses the
explicitly nonnegative quantities $B_h,C_h$ from Section~3, for which
the cubic inequality holds directly without such assumptions.

\section{LLM contribution and formal verification}\label{sec:lean}
\paragraph{Use of language models.}
Large language models drove the research leading to this paper. Their
contribution included developing the mathematical argument, generating
and revising the Lean formalization and numerical code, and drafting
and revising the manuscript. The author directed the work and worked
with the models to improve the explanations and presentation. The author
takes responsibility for the mathematical claims, attribution, and final text.

\paragraph{Formal verification.}
The proof of Theorem~\ref{thm:main} is formalized in Lean 4 using
mathlib~\cite{mathlib}. This includes the comparison inequality, the
exact rational certificate, and the transfer to a fixed clause count
with three distinct variables per clause. Lean checks the proof of the
formal theorem; the correspondence between the formal definitions and
the intended model, and between the formal proof and this exposition,
requires human inspection. We do not claim an independent human
verification of every step.

The formal sources and numerical code are available in the repository
and versioned archive below. Appendix~\ref{app:verification} records
the formal theorem statement, its correspondence with the paper,
the verification environment, and reproduction instructions.
\paragraph{Availability.}
The Lean sources, independent certificate code, and reproduction instructions
are available at
\url{https://github.com/yamelton/sat-upper-4268}.
The versioned software archive for release \texttt{v1.0.0} is preserved at
\url{https://doi.org/10.5281/zenodo.23120101}.
The software is licensed under Apache-2.0, with
third-party notices preserved. The manuscript is excluded from the software
license; copyright in the manuscript is retained by Fedor Vorobyev.

\paragraph{Acknowledgments.}
The author thanks the Lean and mathlib communities for the proof
assistant and mathematical library used in this work. The mathematical
debt to earlier interpolation methods is described in the introduction
and Section~\ref{sec:partition}.

\bibliographystyle{plain}
\bibliography{references}

\clearpage
\appendix
\section{Formalization and reproduction details}\label{app:verification}
\subsection{Formal statement and correspondence with the paper}
The model expresses probability as the number of satisfiable formulas
divided by the total number of formulas. Each clause is represented by
three distinct variable indices and three Boolean values specifying when
its literals are false. A formula is an ordered list of independently
sampled clauses. The final theorem statement is
\begin{verbatim}
Filter.Tendsto
  (fun n : Nat => SatUpper.satisfiabilityProbability n (1067*n/250))
  Filter.atTop (nhds 0)
\end{verbatim}
Natural-number division gives exactly $\lfloor1067n/250\rfloor$.
For $n<3$, there are no clauses with three distinct variables, and the
probability definition uses Lean's division-by-zero convention. These
finitely many indices do not affect the limit. The interpolation,
numerical, and concentration arguments are proved in the development;
they are not assumptions of the final theorem.

The following table identifies the modules corresponding to the main
steps of the paper.

\begin{center}
\small
\begin{tabular}{>{\raggedright\arraybackslash}p{.42\textwidth}>{\raggedright\arraybackslash}p{.50\textwidth}}
\toprule
Mathematical step & Modules below \code{lean/SatUpper/}\\
\midrule
Model and final theorem & \code{Model}, \code{Threshold}\\
Minimum violation count & \code{Ground/Minimum}, \code{Ground/Finite}\\
Insertion and cubic sign & \code{Ground/Insertion}, \code{LogInsertion}, \code{ReplicaCubic}\\
Finite replacement and Poisson average & \code{BinomialPoisson}, \code{PoissonExpectation}, \code{Ground/Comparison}\\
Site bound and mixed moments & \code{Ground/Site}, \code{Ground/SiteBound}, \code{PoissonPair}\\
Rational certificate & \code{GridExponential}, \code{NumericalMoments}, \code{NumericalLogs}, \code{NumericalEnclosure}\\
Fixed count and asymptotics & \code{ReplacementPoissonMean}, \code{ModelTransfer}, \code{Obligations}\\
Variance and conditioning & \code{FiniteVariance}, \code{VarianceControl}, \code{LegalConditioning}\\
Partition-moment bound & \code{Ground/Minimum}, \code{Ground/Finite}\\
\bottomrule
\end{tabular}
\end{center}

The audit asks Lean which axioms are used anywhere in each project
theorem's chain of dependencies and separately checks the final theorem
statement above. It permits only \code{propext}, \code{Classical.choice},
and \code{Quot.sound}, the standard axioms for propositional
extensionality, choice, and quotients. It rejects proof placeholders and
additional evaluation axioms. The audit inspects Lean's environment;
the proofs themselves are checked by Lean's kernel.

The Python calculation is a separate implementation of the numerical
bound; its output is not a premise of the Lean proof.
Lean proves the finite rational inequality~\eqref{eq:rational} using
\code{decide +kernel}, which performs computation checked by Lean's kernel.

\newpage
\subsection{Reproduction}
The formalization uses Lean 4.24.0 and mathlib~\cite{mathlib}, pinned to
revision
\begin{center}
\code{f897ebcf72cd16f89ab4577d0c826cd14afaafc7}.
\end{center}
With the pinned toolchain installed, the proof and audit are reproduced
from the repository root by
\begin{verbatim}
sh lean/run.sh exe cache get
python3 lean/verify.py
python3 lean/test_verify.py
\end{verbatim}
The independent certificate uses only the Python standard library:
\begin{verbatim}
python3 src/richer_certificate.py --alpha 4.268 --cutoff 128 \
  --surveys .05,.85,.5 .4,.4,.5 --output certificate.json
\end{verbatim}
The output field \code{psi\_upper} gives a rational upper bound on
$\Psi_{128}$; its \code{exact} entry is the fraction used for checking,
while \code{decimal\_for\_display} is only an approximation.
The flag \code{certified\_negative} records whether this rational upper
bound is negative.
The repository additionally includes numerical cross-checks, source
provenance, and instructions for rebuilding the paper and release bundle.
\end{document}